\documentclass[sigconf]{acmart}

\settopmatter{printacmref=false}
\renewcommand\footnotetextcopyrightpermission[1]{}
\pagestyle{plain}

\usepackage{enumitem}
\usepackage{lipsum}
\usepackage{xspace}
\usepackage{xcolor}
\usepackage[linesnumbered,ruled]{algorithm2e}
\usepackage{amsbsy}
\usepackage{amsmath}
\usepackage{flushend}
\usepackage{pgfplots}
\usepackage{bbm}

\usepackage{natbib}
\bibliographystyle{unsrtnat}

\newtheorem{example}{Example}
\newtheorem{theorem}{Theorem}

\newcommand{\sys}{\textsf{LikeMind}\xspace}
\newcommand{\separateshort}{\vspace{3pt}}

\AtBeginDocument{%
  \providecommand\BibTeX{{%
    \normalfont B\kern-0.5em{\scshape i\kern-0.25em b}\kern-0.8em\TeX}}}

\begin{document}

\title{Interactive and Explainable Point-of-Interest Recommendation using Look-alike Groups}

\author{Behrooz Omidvar-Tehrani, Sruthi Viswanathan, Jean-Michel Renders}
\email{firstname.lastname@naverlabs.com}
\affiliation{\institution{NAVER LABS Europe\\\textit{http://www.europe.naverlabs.com}}}
\renewcommand{\shortauthors}{Omidvar-Tehrani et al.}

\begin{abstract}
Recommending Points-of-Interest (POIs) is surfacing in many location-based applications. The literature contains personalized and socialized POI recommendation approaches which employ historical check-ins and social links to make recommendations. However these systems still lack customizability (incorporating session-based user interactions with the system) and contextuality (incorporating the situational context of the user), particularly in cold start situations, where nearly no user information is available. In this paper, we propose \sys, a POI recommendation system which tackles the challenges of cold start, customizability, contextuality, and explainability by exploiting look-alike groups mined in public POI datasets. \sys reformulates the problem of POI recommendation, as recommending explainable look-alike groups (and their POIs) which are in line with user's interests. \sys frames the task of POI recommendation as an exploratory process where users interact with the system by expressing their favorite POIs, and their interactions impact the way look-alike groups are selected out. Moreover, \sys employs ``mindsets'', which capture actual situation and intent of the user, and enforce the semantics of POI interestingness. In an extensive set of experiments, we show the quality of our approach in recommending relevant look-alike groups and their POIs, in terms of efficiency and effectiveness.
\end{abstract}

\maketitle
\pagestyle{plain}

\section{Introduction}
\label{sec:intro}
There has been a meteoric rise in the use of location-based systems to benefit from services such as bike sharing~\cite{chung2018bike}, localized advertising~\cite{feng2016towards}, urban emergency management~\cite{xu2016crowdsourcing}, and regional health-care~\cite{coviz}. Point-of-Interest (POI) recommendation is one of the most prominent applications of location-based services which benefit both consumers and enterprises. The task of POI recommendation is to recommend a user the POIs (e.g., restaurants, coffee shops, museums) that they may be interested in, but have never visited in a given time window. While POI recommendation in general inherits the large body of work in the community of recommender systems~\cite{bao2015recommendations}, it also carries new constraints and challenges that may not be the case for a traditional recommender, such as spatial distance semantics between POIs and user interactions on maps to select favorite POIs.  An ideal POI recommendation approach should typically capture the following aspects (\textbf{A1} to \textbf{A4}).

\separateshort
\noindent \textbf{A1: Personalization.} First, POI recommendations should be personalized, i.e., the results should be based on user preferences captured in the form of user's historical check-ins and interests.

\separateshort
\noindent \textbf{A2: Socialization.} People trust like-minded users and base their decisions on what people like them appreciated before~\cite{DBLP:conf/chi/DuPSS17}. Hence the POI recommendation should also incorporate social aspects and reflect the preferences of other people similar to the user.

\separateshort
\noindent \textbf{A3: Customization.} Beyond being personalized, the POI recommendation system should also be \textit{exploratory} to incorporate user's interactions with the system, i.e., the customization of recommended POIs. This aspect calls for {\em exploratory scenarios} where the user does not have a precise POI searching criterion in mind, and wants to navigate POIs, and ultimately land on a decision after a few iterations~\cite{paay2016discovering}.

\separateshort
\noindent \textbf{A4: Contextualization.} The POI recommendation should also capture the current situation of the user (aka context). While the literature focuses on time and location as the context, user's actual situation and intent have received less attention~\cite{Anand:2007:CR:1422159.1422167}.

\separateshort
To the best of our knowledge, no POI recommendation approach in the literature addresses all the aforementioned aspects, \textbf{A1} to \textbf{A4}, simultaneously. Personalization (\textbf{A1}) has been the focus of many approaches in the past (\cite{levandoski2012lars,liu2013learning,yin2014lcars}, to name a few), where historical check-ins are exploited to predict which POI the user prefers to visit next, using techniques such as Matrix Factorization and Poisson Factor Model. 
Also, socialization (\textbf{A2}) has been addressed in the literature (\cite{10.1145/2525314.2525357,li2016point}, to name a few), where information encapsulated in location-based social networks (LBSN) are employed to predict user's preferences using link-based methods~\cite{wang2013location}. Below we discuss fundamental challenges for materializing a system which incorporates all the aforementioned aspects.

\separateshort
\noindent \textbf{C1: Cold start.} The problem of cold start arises when a user with a non-existent or limited history of check-ins asks for recommendations. 
A typical recommendation system which relies on historical check-ins and user similarities for personalized and socialized recommendations (\textbf{A1} and \textbf{A2}, respectively) is unable to output results in the presence of cold start.
There are three kinds of users which may cause a cold start: $(i)$ a new user with no history, $(ii)$ a casual user who has not visited the service for a long time, and $(iii)$ private users who do not want their data to be exploited.
Note that the cold start problem refers to the lack of user history and not the lack of user signals, such as user's current time and location. Hence a POI recommendation system can benefit from those available signals to surmount the cold start.

\noindent \textbf{C2: Interpretations of interactions.} Most POI recommendation systems assume the process to be {\em one-shot}, where the user enters the system with a clear unambiguous intent, and the system returns the most interesting POIs which relates to that intent. In practice, this architecture is not realistic. Users need to interact with the system to gradually build their intent. The challenge with multi-shot recommendation systems (\textbf{A3}) is that it is not clear how user interactions with the system should influence the recommendation. 

\begin{figure}[t]
\includegraphics[width=\columnwidth]{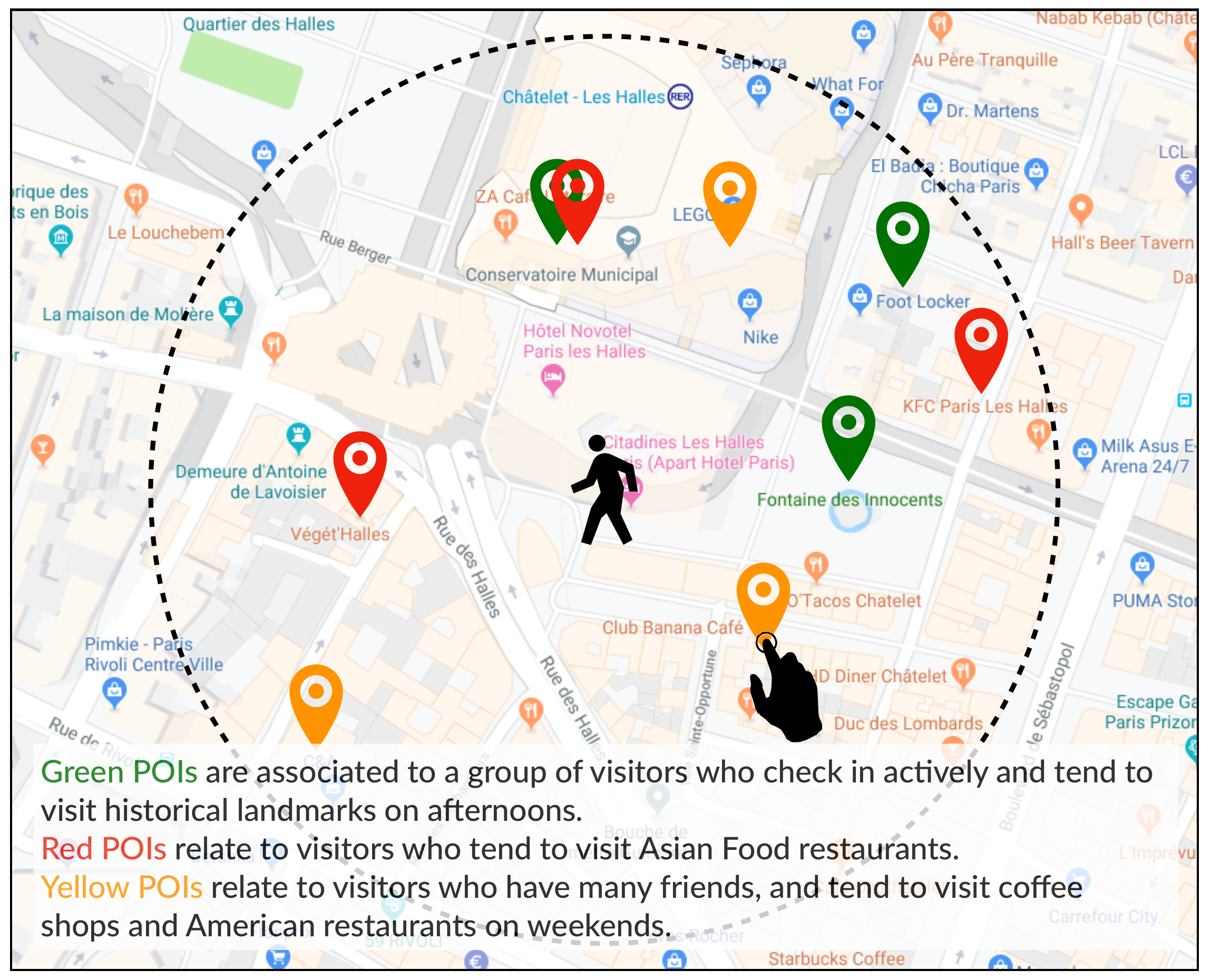}
\caption{\sys in practice.}
 \label{fig:example}
\end{figure}

\separateshort
\noindent \textbf{C3: Context integration.} Another challenge is to integrate the context of the user into the recommendation process (\textbf{A4}). The context is not limited to time and location, but also to the user's mindset at the time of receiving recommendations. For instance, different POIs should be recommended in case the user is hungry, or in case he/she is seeking some personal relaxation time (i.e., me time).

\separateshort
\noindent \textbf{C4: Explainability.} Users may not trust what they get from the recommender, i.e., algorithmic anxiety due to the cold start problem (\textbf{A1}) and session-based interactions with the system (\textbf{A3})~\cite{jhaver2018algorithmic}. Hence it is of crucial importance to let users know \textit{why they receive certain POIs as recommendation results}. 

\separateshort
To collectively address the challenges \textbf{C1} to \textbf{C4}, we propose \sys, an interactive and explainable POI recommendation system based on look-alike groups. The intuition behind \sys is as follows: while it is assumed that no data is available from the user, POI recommendations can be obtained by finding look-alike groups in publicly available POI datasets such as \textsc{Yelp}\footnote{\it https://www.yelp.com/dataset}, \textsc{Foursquare}\footnote{\it https://developer.foursquare.com/docs/places-database/}, 
\textsc{Jiepang}\footnote{\it https://jiepang.com}, 
\textsc{Facebook Places}\footnote{\it https://www.facebook.com/places/},
and \textsc{Gowalla}\footnote{\it https://snap.stanford.edu/data/loc-gowalla.html}. It is shown in the literature that users trust their peers and get inspired by them for decision making~\cite{DBLP:conf/chi/DuPSS17}. The recommended POIs are explainable using their associated groups, e.g., ``the group of photoholics tends to visit Montmartre in the 18th district of Paris'', and ``the group of food lovers tends to visit the restaurant `les Apotres de Pigalles', in the same region.'' The user will then interact with those groups to detect with which group he/she identifies. As a result of this interaction, new groups will be mined, to align with the user's preferences. This iterative process ensures that groups and their POIs reflect user's interests. Note that \sys discovers user's interests and aligns recommendations accordingly, {\em without the need of any historical check-in data from the user}. The following example describes how \sys functions in practice.

\begin{example}
Jane is visiting Paris as a tourist. She is walking in the area of the Pompidou center. After 30 minutes of walking, she gets tired and asks \sys for ``me time'' recommendations, to find POIs in her vicinity (the dashed circle in Figure~\ref{fig:example}) where she can sit and relax. Jane is concerned about privacy and does not share any historical check-in data with the system. \sys outputs three user groups related to Jane's intent, and top-three POIs for each group (Figure~\ref{fig:example}). She looks at group descriptions to see where she can find some doppelg\"angers in group members. Being a social person, Jane shows interest in the yellow group, i.e., visitors who have many friends (i.e., social visitors like her), and tend to visit coffee shops and American restaurants on weekends. This motivates her to reformulate her intent and ask for a place where she can eat. She interacts with the system and asks where people usually eat in the neighborhood. Hence \sys returns another three groups to satisfy Jane's intent (Figure~\ref{fig:example2}). This helps Jane to make up her mind and go to an American-cuisine restaurant.
\end{example}

The example shows that interactive and explainable recommendations help users refine their ambiguous needs and finalize their decision making process on POI selection. The way users can specify their mindsets (e.g., ``me time'', ``I'm hungry'') enables users enforce their context to the recommendation system (i.e., tackling \textbf{C3}) and steer the results towards what they are really interested to receive.


\separateshort
\noindent \textbf{Contributions.} In summary, we propose the following contributions.

\begin{figure}[t]
\includegraphics[width=\columnwidth]{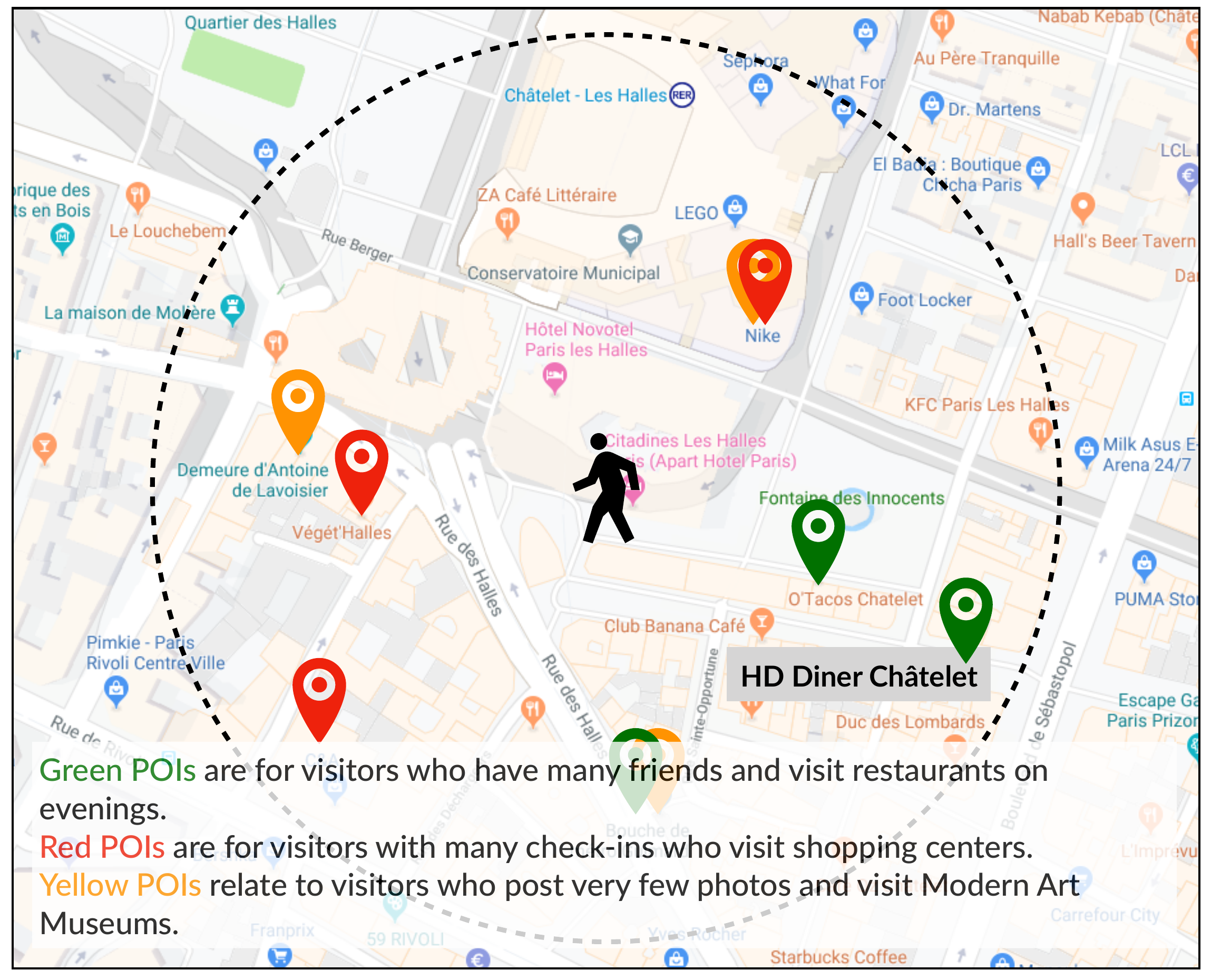}
\caption{Followup iteration in \sys after clicking a POI from the yellow group in Figure~\ref{fig:example}.}
 \label{fig:example2}
\end{figure}

\begin{itemize}[leftmargin=*]
    \item We address the cold start and explainability challenges by employing look-alike user data as a proxy for user preferences. We reformulate the problem of POI recommendation, as recommending explainable groups (and their POIs) which are in line with user's interests.
    \item We consider ``recommendation'' as an exploratory process where the user interacts with the system, and his/her interactions impact the way look-alike groups are selected out. We mention how we build a user portfolio of POIs in order to make recommendations more relevant in further iterations.
    \item We introduce and formalize the notion of ``mindsets'', which captures actual situation and intents of the user. We mention how the interestingness of POIs is maximized using mindsets.
    \item In an extensive set of quantitative and qualitative experiments, we show the efficiency and effectiveness of \sys.
\end{itemize}

\separateshort
The outline of the paper is as follows: in Section~\ref{sec:datamodel}, we provide the data model. In Section~\ref{sec:pb}, we formally define our problem and discuss the challenges. In Section~\ref{sec:algo}, we describe \sys architecture and its underlying algorithm as a solution for interactive and explainable POI recommendation. Section~\ref{sec:exp} presents the detailed experiments. The related work is provided in Section~\ref{sec:rel}. Last, we conclude and discuss future directions in Section~\ref{sec:conc}.

\section{Data Model}
\label{sec:datamodel}
We consider a user $\mu$ asking for POI recommendations. We denote $\mu$'s portfolio as a set $\mathcal{P}_{\mu}$ of POIs that $\mu$ is interested in. Following our assumption of cold start (\textbf{C1}), we consider that initially $\mathcal{P}_{\mu} = \emptyset$. Additionally, we consider a POI dataset $\mathcal{D} = \langle \mathcal{U}, \mathcal{P} \rangle$ with a set of visitors $\mathcal{U}$ and a set of POIs $\mathcal{P}$. 

\separateshort
\noindent \textbf{Visitors.} For a visitor $u \in \mathcal{U}$, the set $u.\mathit{demogs}$ contains tuples of the form $\langle d, v \rangle$ where $d$ is a demographic attribute (e.g., age, gender, number of trips, number of check-ins), and $v \in \mathit{domain}(d)$. Also the set $u.\mathit{checkins}$ contains tuples of the form $\langle p, t \rangle$ which represents that $u$ has visited a POI $p \in \mathcal{P}$ at time $t$.

\separateshort
\noindent \textbf{POIs.} A POI $p \in \mathcal{P}$ is defined as a tuple $p = \langle \mathit{loc}, \mathit{att} \rangle$ where $p.\mathit{loc}$ is itself a tuple $\langle lat, lon \rangle$ (latitude and longitude, respectively) which defines where $p$ is situated geographically. 
The set $p.\mathit{att}$ is a set of tuples of the form $\langle a, v \rangle$ which denote that the POI has the value $v$ for the attribute~$a$, such that $v \in \mathit{domain}(a)$. 

\separateshort
We measure the interestingness of POIs using \textit{POI utility} functions. A more interesting POI has higher chances to be recommended to the user $\mu$. A POI utility function $f: 2^{\mathcal{P}} \mapsto [0, 1]$ returns a value between $0$ and $1$ which reflects the extent of interestingness for one or several POIs~\cite{geng2006interestingness}. Table~\ref{tbl:utility} lists POI utility functions that we employ in this work. We denote the set of all POI utility functions as $\mathcal{F}$. We also define all utility functions as maximization objectives.

\begin{table}[h]
\begin{tabular}{|l|l|}
\hline
\textbf{Utility function} & \textbf{Description} \\ \hline \hline
$\mathit{popularity}(P)$ & \begin{tabular}[c]{@{}l@{}}normalized average number of\\POI check-ins in $P$.\end{tabular} \\ \hline
$\mathit{prestige}(P)$ & \begin{tabular}[c]{@{}l@{}}normalized average rating\\score of POIs in P.\end{tabular} \\ \hline
$\mathit{recency}(P)$ & \begin{tabular}[c]{@{}l@{}}inverse difference between the\\current date and the average\\ insertion date of POIs in $P$.\end{tabular} \\ \hline
$\mathit{coverage}(P)$ & \begin{tabular}[c]{@{}l@{}}the area of the convex hull induced by\\the geographical location of POIs in $P$\\normalized by the size of the city.\end{tabular} \\ \hline
$\mathit{surprisingness}(P)$ & \begin{tabular}[c]{@{}l@{}}normalized Jaccard distance between\\POI categories of $P$ and POI\\categories of the visited POIs\\by the user in $\mathcal{P}_{\mu}$.\end{tabular}  \\ \hline
$\mathit{category}(P,\mathit{cat})$ & \begin{tabular}[c]{@{}l@{}}normalized Jaccard similarity\\between the set $\mathit{cat}$ and the\\POI categories in $P$.\end{tabular} \\ \hline
$\mathit{diversity}(P)$ & \begin{tabular}[c]{@{}l@{}}normalized Jaccard distance\\between sets of POI categories in $P$.\end{tabular} \\ \hline
$\mathit{size}(P)$  & \begin{tabular}[c]{@{}l@{}}normalized average radius\\of POIs in $P$.\end{tabular} \\ \hline
\end{tabular}
\caption{POI utility functions ($P \subseteq \mathcal{P}$).}
\label{tbl:utility}
\vspace{-20pt}
\end{table}

\section{Problem Definition}
\label{sec:pb}
We form our problem definition based on two core assumptions. First, we assume that the user $\mu$ is in an \textit{exploratory setting} and does not necessarily have a clear idea of his/her needs~\cite{paay2016discovering}, and the user is going to sharpen his/her intent in several iterations~\cite{nandi2011guided}. Second, we conjecture that \textit{look-alike user data} is a good proxy to gain user preferences~\cite{behrooztkde}, in the absence of historical check-ins of $\mu$. These two assumptions enable us to define the problem according to the challenges \textbf{C1} to \textbf{C4} presented in Section~\ref{sec:intro}. In this section, first we discuss how the aforementioned assumptions are incorporated in the problem of interactive and explainable POI recommendation. Then we formally define our problem and discuss its hardness.

\subsection{Exploratory settings}
\label{sec:exploratory}
\noindent \textbf{Context.} A core concept in an exploratory POI recommendation setting is ``context'', which is often materialized as the current time and location of the user,
denoted as a tuple $c_{\mu} = \langle \mathit{loc}, \mathit{time} \rangle$ for a given user~$\mu$. For more generality, we often represent $c_{\mu}.\mathit{time}$ as a categorized variable.\footnote{\it We consider the following categories for time: ``morning'' (5AM-11AM), ``afternoon'' (12PM-5PM), ``evening'' (6PM-10PM), and ``night'' (11PM-4AM).} An additional dimension of contextuality is ``mindset'', i.e., the actual situation and intent of the user.

\separateshort
\noindent \textbf{Mindsets.} Mindsets should reflect the way interestingness of POIs are computed based on user's interests. While online services such as \textsc{AroundMe}\footnote{\it http://www.aroundmeapp.com} 
enable users to explore their nearby region by selecting explicit POI categories (e.g., museums, historical landmarks), mindsets capture implicit intents of users (e.g., ``let's learn'') which are more challenging to capture. A mindset $m$ is a tuple $m = \langle \mathit{label}, \mathit{func}() \rangle$, where $\mathit{label}$ provides a short description of the mindset, and $\mathit{func}()$ defines semantics of POI interestingness. 
For instance, in case $m.\mathit{label}=$ {\em ``let's learn''}, $m.\mathit{func}()$ favors museums, libraries and cultural landmarks. We denote the set of all mindsets as $\mathcal{M}$.

\begin{table}[t]
\begin{tabular}{|l|l|}
\hline
\textbf{Mindset label}     & \textbf{Description}                                                                                                                                  \\ \hline \hline
\textbf{$m_1$: I'm new here}  & \begin{tabular}[c]{@{}l@{}}towards touristic POIs about the \\ popular attractions in the city.\end{tabular}                                          \\ \hline
\textbf{$m_2$: surprise me}   & \begin{tabular}[c]{@{}l@{}}towards POIs which haven't been\\ visited before by the user and are\\uncommon (seldom visited)\end{tabular}               \\ \hline
\textbf{$m_3$: let's workout} & \begin{tabular}[c]{@{}l@{}}towards POIs related to physical\\exercises like swimming pools,\\parks, gyms, and mountains\end{tabular}                 \\ \hline
\textbf{$m_4$: me time}       & \begin{tabular}[c]{@{}l@{}}towards POIs related to activities\\to treat oneself and be pursued solo\\to unwind and relax\end{tabular}                                   \\ \hline
\textbf{$m_5$: I'm hungry}    & \begin{tabular}[c]{@{}l@{}}towards getting faster access\\to food-related POIs nearby\end{tabular}                                                                                            \\ \hline
\textbf{$m_6$: let's learn}   & \begin{tabular}[c]{@{}l@{}}towards POIs such as museums,\\libraries and cultural landmarks \end{tabular}                                                                                        \\ \hline
\textbf{$m_7$: hidden gems}   & \begin{tabular}[c]{@{}l@{}}towards intriguing local POIs\\ that are highly rated but\\not necessarily popular\end{tabular} \\ \hline
\end{tabular}
\caption{Mindsets}
\label{tbl:mindsets}
\vspace{-20pt}
\end{table}

\separateshort
\noindent \textbf{Materializing mindsets.} In a field study, with a prototype of a POI recommendation system, we performed an initial concept validation using the Wizard of Oz methodology
to discover how users perceive exploring POIs using different mindsets (see ~\cite{DBLP:conf/chi/ViswanathanOBRG20} for the details of our field study). We recruited a diverse set of $12$ participants for an in-depth analysis of their situation and intent for receiving POI recommendations. During these sessions with our participants, we presented them with $7$ different mindsets, which we had initially hypothesized. Our participants were first asked to choose a mindset, which would in turn lead them to find a POI they would like to visit. In this study, we were able to qualitatively confirm the usefulness of mindsets to find POIs. 
We meticulously paraphrased our participants' interpretation of the mindsets to reach descriptions provided in Table~\ref{tbl:mindsets}. While these mindsets are the result of an in-depth study of POI exploration needs, our model is generic enough to allow new mindsets. We describe the process of mindset creation in Section~\ref{sec:beyond}.

\separateshort
\noindent \textbf{Mindset function.} Given a mindset $m$, the function $m.\mathit{func}()$ is defined as follows.

\begin{equation}\label{eq:func}
m.\mathit{func}(P, \mu) = \frac{\sum_{f_i \in \mathcal{F}} w_{i,\mu}  b_{i,m}  f_i(P)}{\sum_{f_i \in \mathcal{F}} w_{i,\mu} b_{i,m}}
\end{equation}

\begin{table*}[t]
\centering
\begin{tabular}{|l|c|c|c|c|c|c|c|c|}
\hline
                  & \textbf{popularity} & \textbf{prestige} & \textbf{recency} & \textbf{coverage} & \textbf{surprisingness} & \textbf{category} & \textbf{diversity} & \textbf{size}   \\ \hline \hline
\textbf{$m_1$: I'm new here}  & $\boldsymbol{0.25}$     & $\boldsymbol{0.25}$   & $0.10$   & $0.15$   & $0.00$          & $0.00$    & $\boldsymbol{0.25}$    & $0.00$  \\ \hline
\textbf{$m_2$: surprise me}   & $0.25$     & $0.20$    & $0.00$   & $0.00$    & $\boldsymbol{0.30}$         & $0.00$    & $0.15$     & $0.10$  \\ \hline
\textbf{$m_3$: let's workout} & $0.25$      & $0.25$    & $0.00$   & $0.10$    & $0.00$          & $\boldsymbol{0.40}$    & $0.00$     & $0.00$  \\ \hline
\textbf{$m_4$: me time}       & $0.10$      & $0.10$    & $0.00$   & $0.10$    & $0.00$          & $\boldsymbol{0.40}$    & $0.00$     & $0.30$  \\ \hline
\textbf{$m_5$: I'm hungry}    & $0.05$      & $0.20$    & $0.10$   & $0.15$    & $0.00$          & $\boldsymbol{0.40}$    & $0.05$     & $0.05$  \\ \hline
\textbf{$m_6$: let's learn}   & $0.20$      & $0.20$    & $0.00$   & $0.10$    & $0.00$          & $\boldsymbol{0.40}$    & $0.10$     & $0.00$  \\ \hline
\textbf{$m_7$: hidden gems}   & $\boldsymbol{0.30}$      & $\boldsymbol{0.30}$    & $0.15$  & $0.00$    & $0.00$          & $0.00$    & $0.00$     & $0.25$ \\ \hline
\end{tabular}
\caption{Priors in mindsets. The largest values of prior per mindset are shown in bold.}
\label{tbl:bias}
\vspace{-15pt}
\end{table*}

In Equation~\ref{eq:func}, $f_i(P)$ is a utility function (see Table~\ref{tbl:utility}) applied on a set of POIs $P \subset \mathcal{P}$, and $w_{i,\mu}$ and $b_{i,m}$ are the user-specific weight and the prior of the function $f_i$ for the user $\mu$ and the mindset~$m$, respectively. Priors reflect the importance of a utility function for a mindset. In case $b_{i,m} = 0$, it means that the function $f_i$ has no influence on the mindset $m$. On the contrary, in case $b_{i,m} = 1$, it means that the mindset~$m$ is defined only based on $f_i$. User-specific weights, on the other hand, reflect the importance of a utility function for the user. A user may have, for instance, more interest in popularity than coverage. The weights are assumed to shape up when the user interacts with the system. Given the set of all possible user-specific weights~$W$, we initially set $\forall w \in W, w=1.0$. While weights are dynamic and changes per user, priors can be learned offline and stay unchanged at the online execution.

\begin{table*}[]
\begin{tabular}{|l|l|}
\hline
\textbf{Mindset} & \textbf{Categories of interest}                                                                                       \\ \hline
\textbf{$m_3$: let's workout}               & sport fields, park, health and fitness, bowling, tennis court, ice skating, gym                                 \\ \hline
\textbf{$m_4$: me time}                &  outdoor, food, tea room, bar, coffee shop \\ \hline
\textbf{$m_5$: I'm hungry}                & food, restaurant\\ \hline
\textbf{$m_6$: let's learn}                &  museum, art, gallery, library sculpture, bookstore, movie theater, historical landmark, monument  \\ \hline
\end{tabular}
\caption{Categories of interest for mindsets}
\label{tbl:cat}
\vspace{-20pt}
\end{table*}

\separateshort
\noindent \textbf{Utility function priors.}
Table~\ref{tbl:bias} shows a consensus over our field study~~\cite{DBLP:conf/chi/ViswanathanOBRG20} on utility function priors in mindsets. For instance, the mindset $m_2$ (i.e., ``surprise me'') is a combination of the following utility functions in decreasing order of their prior value: {\em surprisingness}, {\em popularity}, {\em prestige}, {\em diversity}, and {\em size}. In case of the {\em category} function, we need to specify POI categories of interest. Table~\ref{tbl:cat} shows these specific categories for each mindset. Note that $m_1$, $m_2$, and $m_7$ are not related to the {\em category} function, hence are excluded from Table~\ref{tbl:cat}. While these categories come from the \textsc{Gowalla} dataset, similar set of categories exists in other POI datasets as well. 


\subsection{Look-alike user data}
\label{sec:lookalike}
There exist various publicly available POI datasets
structured as $\mathcal{D} = \langle \mathcal{U}, \mathcal{P} \rangle$. 
To build look-alike relations in a publicly available POI dataset, we build ``visitor groups'' which aggregate a set of visitors with common demographics and/or POIs~\cite{behrooztkde}. Grouping visitors is in conformance with the ``human mobility behavioral clustering phenomenon'' which mentions that individual visiting locations tend to cluster together~\cite{tobler1970computer}. Visitor groups are obviously virtual and group members do not necessarily know each other. In other words, members of a group are ``location friends'' (who have checked in the same places) and not ``social friends'' (who are socially connected in an LBSN)~\cite{li2016point}. A visitor group is a triple $g = \langle \mathit{members}, \mathit{demogs},$ $\mathit{POIs} \rangle$ where $g.\mathit{members}$ $\subseteq$ $\mathcal{U}$, and ``demogs'' and ``POIs'' contain following expression-based conditions that those members should satisfy:
%


\begin{itemize}[leftmargin=*]
    \item $(i.)$ $\forall u \in g.\mathit{members}, \forall \langle a,v\rangle \in g.\mathit{demogs}, \langle a,v\rangle \in u.\mathit{demogs}$,
    \item $(ii.)$ $\forall u \in g.\mathit{members}, \forall p \in g.\mathit{POIs}, \exists \langle p,t \rangle \in u.\mathit{checkins}$.
\end{itemize}

We also define the relevance between a visitor group $g$ and $\mathcal{P}_{\mu}$ as a function $rel(g,\mathcal{P}_{\mu})$ defined below.

\begin{equation}\label{eq:rel}
  \mathit{rel}(g,\mathcal{P}_{\mu})=
    \begin{cases}
      \frac{|\mathcal{P}_{\mu} \cap g.\mathit{POIs}|}{|\mathcal{P}_{\mu}|} & \mathcal{P}_{\mu} \neq \emptyset\\
      1 & \mathit{otherwise} \\
    \end{cases} 
\end{equation}

Intuitively, a group $g$ is relevant to a user $u$ iff there exist some common POIs between the user $u$ and the group~$g$. In case $\mathcal{P}_{\mu} = \emptyset$ (i.e., cold start), we assume that~$g$ is relevant whatsoever. 

\subsection{Formal problem definition}
\label{sec:problemdef}
We define our problem as follows. Given a user $\mu$ and his/her affiliated context $c_{\mu} = \langle \mathit{loc}, \mathit{time} \rangle$, a mindset $m = \langle \mathit{label}, \mathit{func}\rangle$, a radius~$r$, a relevance threshold $\sigma$, and integers $k$ and $k'$, the problem is to find $k$ groups $G$ and $k'$ POIs for each group in $G$, such that the following conditions are met.

\begin{itemize}[leftmargin=*]
    \item $(i.)$ $\forall g \in G, g$, $rel(g,\mathcal{P}_{\mu}) \geq \sigma$;
    \item $(ii.)$ $\forall g \in G, \forall p \in g.\mathit{POIs}, \mathit{distance}(p.loc,c_{\mu}.loc) \leq r$;
    \item $(iii.)$ $\forall g \in G, \forall p \in g.\mathit{POIs}, \forall u \in g.\mathit{members},$\\ $\forall \langle p,t \rangle \in u.\mathit{checkins}, t = c_{\mu}.\mathit{time}$;
    \item $(iv.)$ $\Sigma_{g \in G} \Big(m.\mathit{func}(g.\mathit{POIs}, \mu)\Big)$ is maximized.
\end{itemize}

The first three conditions ensure that groups are relevant to the user, in vicinity of the user's location, and in the same time category of the user's context. The last condition applies the input mindset to groups, and verifies whether groups' POIs are maximally in line with the mindset.

\separateshort
\noindent \textbf{Problem hardness.} Appendix~\ref{apx:np} shows the theoretical hardness of the aforementioned problem. The problem is also challenging in practice, because the potential number of relevant groups is huge and hence the mindset maximization is not straightforward. Given the complexity of formulating look-alike groups and mindsets in the task of POI recommendation, we consider set-based semantics for our recommendation problem. We plan to consider more complex problems in our immediate future work, e.g., POI sequence recommendation~\cite{10.1145/3274895.3274925,10.1145/3274895.3274958} and travel package recommendation~\cite{DBLP:conf/iui/Amer-YahiaBEOV20,reza2017optimal}.

\section{Algorithm}
\label{sec:algo}
\sys is a session-based system which begins with an ambiguous user's intent for POI recommendation, and ends when he/she is satisfied with the resulting POIs.
Each session consists of a finite sequence of iterations which captures interactions with the user. A new iteration begins by defining a mindset (which may remain the same as the previous iteration), which then results in~$k$ relevant groups and $k'$ POIs for each group. 
At the end of each iteration, the user is free to bookmark some of the recommended POIs as his/her favorites. Hence there are two types of feedback that the user can provide to the system: {\em the mindset} and {\em POI bookmarks}. This multi-shot architecture (i.e., \textbf{A3}) contradicts most traditional single-shot POI recommendation approaches
by incorporating {\em user interactions} in the recommendation~(i.e., addressing \textbf{C2}). Note that the input mindset at each iteration is part of the user signal and not user history, and \sys exploits mindsets to surmount the cold start.\footnote{\it One direction of our future work is to automatically predict mindsets based on other available user signals.}

\separateshort
\noindent \textbf{\sys iterations.} At each iteration, the system returns groups and POIs from a POI dataset (addressing the cold start challenge~\textbf{C1}) based on the functionality of the selected mindset. Algorithm~\ref{algo:iprelo} describes the process in each single iteration:

\separateshort
\noindent \textsf{\textit{Step 1: Neighborhood filtering.}} First, the system finds all the nearby POIs which are at most $r$ kilometers/miles far from the user (line~\ref{ln:getpoi}), where $r$ is a user-defined input parameter. The parameter $r$ enforces the Tobler's First Law of Geography~\cite{tobler1970computer} and ensures that recommendations relate to user's location (i.e., \textbf{C3}). We employ an efficient implementation of \texttt{ST\_DWithin} function in PostGIS to achieve a near real-time retrieval of POIs. We denote the set of nearby POIs as~$P \subseteq \mathcal{P}$. 

\separateshort
\noindent \textsf{\textit{Step 2: Check-in retrieval.}} Given $P$, \sys then retrieves all check-ins of the nearby POIs (line~\ref{ln:getcheckins}). We denote the set of all nearby check-ins as~$H$. A check-in $\langle p,t \rangle$ is in $H$ iff $p \in P$ and $t = c_{\mu}.\mathit{time}$. The second condition conveys that the check-in and the user context should belong to the same time category, e.g., ``morning''. \sys finds which visitors checked in nearby POIs using check-ins.

\separateshort
\noindent \textsf{\textit{Step 3: Group mining.}} Then the system mines groups among checked-in visitors (line~\ref{ln:minegroups}) denoted as $G^*$. Given that $|G^*| \gg k$,  the system finds $k$ groups $G \subset G^*$ (s.t., $|G|=k$) which collectively maximize the mindset function (line~\ref{ln:max}). The group set~$G$ is henceforth aligned to the user's intent expressed in the mindset.

\separateshort
\noindent \textsf{\textit{Step 4: POI selection.}} Finally, \sys picks top-$k'$ POIs for each group which are visited by most of the group members (line~\ref{ln:poisinside}). At this stage, the user observes $k$ groups and $k'$ POIs for each, and may decide to bookmark POIs for enriching her portfolio.

\begin{algorithm}[h]
\DontPrintSemicolon
\KwIn{Visitors $\mathcal{U}$ and POIs $\mathcal{P}$, user context $c_{\mu} = \langle \mathit{loc}, \mathit{time} \rangle$, mindset $m$, radius $r$, number of groups $k$, number of POIs per group $k'$}
\KwOut{Groups $G$ and their POIs $P_G$}
$P \gets \mathit{nearby\_POIs}(\mathcal{P}, c_{\mu}.\mathit{loc}, \mathit{r})$\label{ln:getpoi}\;
$H \gets \mathit{checkins\_of}(P, c_{\mu}.\mathit{time})$\label{ln:getcheckins}\;
$G^* \gets \mathit{mine\_groups}(\mathcal{U}, H)$\label{ln:minegroups}\;
$G \gets \mathit{maximize}(u, G^*, k, m)$\label{ln:max}\;
\lFor{each group $g \in G$}{
$P_G.\mathit{append}(\mathit{top\_POIs}(g, k'))$\label{ln:poisinside}}
\Return{$G$, $P_G$}\; 
\caption{\sys Algorithm}
\label{algo:iprelo}
\end{algorithm}

In the following, we discuss how groups are mined (i.e., the $\mathit{mine\_groups}()$ function in line~\ref{ln:minegroups}), how mindset functions are maximized (i.e., the $\mathit{maximize}()$ function in line~\ref{ln:max}), and how bookmarking POIs improves the recommendations in further iterations.
 
\subsection{Mining groups}
To address the challenge of explainability (\textbf{C4}), we aim to find {\em describable groups} which identify a set of visitors checking in a set of POIs. For this aim, we employ Frequent Itemset Mining (FIM) technique, where each group is a frequent itemset, and items are common demographic attributes and POIs of the group members. While groups can be discovered in myriad ways~\cite{10.1145/3347146.3359346,10.1145/2525314.2525318,behrooztkde}, we choose FIM to obtain describable groups with overlaps, so that visitors can be a member of more than one group and be described in different ways. For each visitor $u \in \mathcal{U}$, we build a transaction embedding $trans(u)$ which contains all demographic attributes and visited POIs of $u$. Appendix
~\ref{apx:embeddings} describes the process of building and enriching the embeddings. The set~$\tau$ contains the embeddings of all visitors, i.e., $\tau = \{\mathit{trans}(u) | u \in \mathcal{U}\}$. Given an arbitrarily group of visitors $g = \langle \mathit{members}, \mathit{demogs},$ $\mathit{POIs} \rangle$, we define the group support $\mathit{supp}(g)$ as a measure of $g$'s significance (Equation~\ref{eq:supp}).

\begin{equation}\label{eq:supp}
\begin{split}
\mathit{supp}(g) = |\{u \in \mathcal{U} | & \mathit{trans}(u) \in \tau, \\ & g.\mathit{demogs} \cup g.\mathit{POIs} \subseteq \mathit{trans}(u)\}|
\end{split}
\end{equation}

The group $g$ is a frequent itemset (and hence a valid group) iff $\mathit{supp}(g)$ is larger than a predefined support threshold. While the support threshold is often a user-defined parameter, we materialize all groups with support larger than~$1$, hence no threshold needs to be specified.

Given the set $\tau$, we employ the \textsc{Extract} algorithm~\cite{feddaoui2016extract} to mine groups. The input of the algorithm is of type ARFF (Attribute-Relation File Format) which contains the set of all enriched embeddings. The frequent itemset mining algorithm is known to be inefficient for large number of transactions, as its execution time grows exponentially with the number of transactions. This is why that the algorithm is often considered as an offline step preceding an online investigation on groups. In \sys, we are able to perform the mining process on-the-fly, thanks to the neighborhood filters preceding the algorithm. The algorithm mines groups only for visitors with check-ins in the vicinity of the user. Hence the size of the visitor set is drastically reduced compared to~$|\mathcal{U}|$.

\subsection{Maximizing mindsets}
\label{sec:maxmindset}
Not all groups are equally interesting to the user. We need to pick $k$ groups out of all mined groups which are in line with the mindset requested by the user. This will tackle the challenge of context integration (\textbf{C3}). Each mindset is associated to a function which is a set of utility functions combined in a linear fashion with user-specific weights and priors (Equation~\ref{eq:func}). The mindset function admits as input a set of POIs, and returns a value in the range $[0,1]$. Given a mindset $m$ and a group $g$, we measure the utility of~$g$ regarding $m$'s functionality, as follows.

\begin{equation}
\mathit{group\_utility}(g) = m.\mathit{func}(g.\mathit{POIs})
\end{equation}

Given the space of all group utility values, the problem is to find~$k$ groups with the largest values of group utility. As each mindset function is constructed as a combination of several utility functions, maximizing mindset functions is a multi-objective optimization problem in nature. However, we employ a scalarization approach in \sys using user-specific weights and priors to reduce the complexity of the problem to single-objective optimization. One direction of future work is to directly employ multi-objective optimization approaches to obtain skyline groups. 

\separateshort
In \sys, we employ a greedy-style hill climbing algorithm to maximize mindset functions (Algorithm~\ref{algo:max}):

\separateshort
\textsf{\textit{Step 1: Pruning.}} The algorithm starts by removing all irrelevant groups to the user~$\mu$ (line~\ref{ln:irrelevant}). This ensures that groups are inline with user's preferences provided in previous interactions. While \sys focuses on the cold start problem and employs no historical data for building recommendations, it can however benefit from user's interactions with the system to align recommendations with his/her preferences. We enable users to describe their preferences by ``bookmarking'' recommended POIs. These bookmarked POIs will enrich the user's portfolio, which later contributes to identifying irrelevant groups to user's preferences. Once the user~$\mu$ clicks on a POI $p$, it will be added to $\mathcal{P}_{\mu}$. Then the relevance of a group $g$ regarding $\mathcal{P}_{\mu}$, i.e., $\mathit{rel}(g,\mathcal{P}_{\mu})$ is computed as in Equation~\ref{eq:rel}.

\separateshort
\noindent \textsf{\textit{Step 2: Mindset function maximization.}} After pruning irrelevant groups, the algorithm iterates over the space of groups to maximize the mindset function. At each step, the algorithm introduces a new group to the set of $k$ groups, and verifies if the value of the mindset function increases. In case of improvement, the new group will become a member of the $k$ groups, and another group will be selected for a substitution. The improvement loop breaks when a time limit~$tl$ exceeds. While the time limit $tl$ should normally be a user-defined input parameter, we inspire from ``progressive analytics''~\cite{DBLP:journals/dagstuhl-reports/FeketeF0S18} and fix the time to ``continuity preserving latency'', i.e., $tl=100ms$, which is the the limit for humans to have an instantaneousness experience of the interactive process.


\begin{algorithm}[t]
\DontPrintSemicolon
\KwIn{User $\mu$, mined groups $G^*$, mindset $m$, relevance threshold $\sigma$, number of returned groups $k$, time limit $\mathit{tl}$}
\KwOut{Groups $G$ s.t., $|G|=k$}
\For{$g \in G^*$}
{
\lIf{$\mathit{rel}(g,\mathcal{P}_u) < \sigma$}
{
$G^* \gets G^* \setminus \{g\}$\label{ln:irrelevant}
}
}
$\mathit{sort}(G^*,\mathit{supp},\mathit{descending})$\label{ln:sort}\;
$G \gets \mathit{pick}(G^*, k)$\label{ln:pick1}\;
$g_{out} \gets \mathit{pick}(G^*, 1)$\label{ln:pick2}\;
\While{$tl$ is not exceeded}
{
\For{$g_{in} \in G$}
{
$G' \gets G \cup g_{out} \setminus g_{in}$\;
\If{$\sum_{g' \in G'}m.\mathit{func}(g') > \sum_{g \in G}m.\mathit{func}(g)$}
{
$G \gets G'$\;
$\mathit{break}$\;
}
}
$g_{out} \gets \mathit{pick}(G^*, 1)$\;
}
\Return{$G$}\; 
\caption{$\mathit{maximize}$ function}
\label{algo:max}
\end{algorithm}

The function $\mathit{pick}()$ (lines \ref{ln:pick1} and \ref{ln:pick2}) may employ different semantics to enforce the ``scanning order'' in the space of groups $G^*$. The semantics is usually designed in a way to boost the optimization process by moving faster towards the optimized value~\cite{DBLP:journals/vldb/Omidvar-Tehrani19}. In our case, mindsets are combinations of different optimization objectives. We employ the ``support'' measure (Equation~\ref{eq:supp}) to enforce order, hence larger groups have higher chances to be selected. Larger groups contain more visitors and provide richer insights~\cite{behrooztkde}. Hence the function $\mathit{pick}(G^*,k)$ returns top-$k$ unseen groups in $G^*$ with the largest support.

\subsection{Improving recommendations}
The recommendations should become more personalized (addressing \textbf{C1}) in further interactions of the user with the system (addressing \textbf{C2}). To achieve this, we update weights in mindset functions according to user's interactions. Given a user $\mu$ and a mindset $m$, the weight of a function $f_i$ in $m.\mathit{func}()$ is calculated as follows.

\begin{equation}
  w_{i,\mu} =
    \begin{cases}
      f_i(\mathcal{P}_{\mu}) & \mathcal{P}_{\mu} \neq \emptyset \\
      1 & \mathit{otherwise} \\
    \end{cases} 
\end{equation}

In case $\mu$ has already performed some interactions with the system (i.e., $\mathcal{P}_{\mu} \neq \emptyset$), the value of $w_{i,\mu}$ is set relative to the orientation of $\mu$'s previous choices towards $f_i$. It reflects the importance of a utility measure for the user~$\mu$. In case no interaction is available, the weight is set to $1.0$. For instance, in case $f_i = \mathit{coverage}$, and $\mathcal{P}_{\mu}$ contains POIs only from one single neighborhood of the city, $w_{i,\mu}$ will probably receive a value close to zero. Intuitively, it shows that there is less importance in $\mu$'s subjective preference for 
coverage.

It is important to notice that $\mathcal{P}_{\mu}$ impacts \sys in two different levels of granularity, group and mindset levels. In the group level, it enables the system to have an early pruning on groups with no POI overlap with $\mathcal{P}_{\mu}$ (line~\ref{ln:irrelevant} of Algorithm~\ref{algo:max}). In the mindset level, it aligns user-specific weights of utility functions to user preferences in $\mathcal{P}_{\mu}$.

\vspace{-5pt}
\subsection{Beyond predefined mindsets}
\label{sec:beyond}
While we propose a set of mindsets in Table~\ref{tbl:mindsets}, the functionality of \sys is not dependent on a predefined set of mindsets and can be easily extended to new ones. A new mindset $m^*$ is identified by its priors $b_{i,m^*}$. The new mindset can be created either from the ground up, or by combining some predefined mindsets. In the former case, $m^*$ is derived from a set of interesting POIs $P$ provided by the user, where $b_{i,m^*} = f_i(P)$. 
In the latter case, $m^*$ is constructed as an aggregation of a subset of predefined mindsets $M \subseteq \mathcal{M}$, where $b_{i,m^*}=\mathit{avg}_{m \in M}(b_{i,m})$.

\section{Experiments}
\label{sec:exp}
In this work, we use \textsc{Gowalla} dataset~\cite{liu2014exploiting}, collected from a popular LBSN with $36,001,959$ check-ins of $319,063$ visitors over $2,844,076$ POIs.
\textsc{Gowalla}'s check-in matrix density is $2.9 \times 10^{-5}$. POIs are grouped in 7 different categories, i.e., community, entertainment, food, nightlife, outdoors, shopping, and travel. 
\textsc{Gowalla} is among the few POI datasets that provide attributes both for visitors and POIs. Hence we can form groups containing both demographic attributes and POIs. This increases the explainability of groups, and enables users find out in which group they identify. Visitor attributes are illustrated in Table~\ref{tbl:discretization}. Also POIs are described using the following attributes: insertion date, location, total number of check-ins, radius, 
and categories. In this section, we present a comprehensive set of experiments to validate the efficiency and effectiveness of \sys. 

\separateshort
\noindent \textbf{Experimental methods.} Unlike one-shot recommendation algorithms, an exploratory system such as \sys incorporates end-users in the loop, hence the need for an extensive validation of both algorithmic and user-centric aspects. Section~\ref{sec:simstudy} evaluates the overall algorithmic behavior of \sys by simulating interaction sessions and reporting the Hit Ratio measure. Section~\ref{sec:userstudy} evaluates the human-oriented aspects of \sys using an extensive user study in Amazon Mechanical Turk. A complementary viewpoint on the performance aspects of \sys is also provided in Appendix~\ref{apx:perf}.

\separateshort
\noindent \textbf{Baselines.} Note that the functionality of \sys is fundamentally different from the most POI recommendation approaches in the literature. First, most methods ranging from Collaborative Filtering to deep learning approaches require the checkins matrix as input, which is assumed to be nonexistent in the case of \sys. Second, they return a ranked list, while set-based semantics is employed in \sys. For a fairer comparison, we employ the two following preliminary baselines: $(i)$ \textsf{Diversity} which returns a diversified set of POIs in user's vicinity, and $(ii)$ \textsf{Popularity} which returns the most popular POIs.


\subsection{Simulation Study}
\label{sec:simstudy}
Our goal in this part of our experiment is to examine the overall behavior of \sys in tackling the cold-start problem, and providing customizability, contextuality, and explainability in POI recommendation. Preferences of real people will be verified later in the user study (Section~\ref{sec:userstudy}). To remove the influence of human decisions from the exploratory process of \sys, we {\em simulate interactions} in the \textsc{Gowalla} dataset. We then report the Hit Ratio (HR) for each simulated session. 

We simulate 100 different sessions and report HR as the average over all the sessions. In each session, first we randomly pick a user~$\mu$ from \textsc{Gowalla}. From the set $\mu.\mathit{checkins}$, we randomly pick a check-in $\langle p, t \rangle$, and set $\mu$'s context as $c_{\mu} = \langle p.\mathit{loc}, t \rangle$. We build a set $\zeta_{\mu} \subseteq \mu.\mathit{checkins}$ as follows.

\begin{equation}
\begin{split}
\zeta_{\mu} = \{ \langle p, t \rangle \in \mu.\mathit{checkins} |& \mathit{distance}(p.loc,c_{\mu}.loc) \in (0,r] \\ & \wedge |t - c_{\mu}.\mathit{time}| \leq 48\}
\end{split}
\end{equation}

The set $\zeta_{\mu}$ contains nearby POIs (constrained using the radius~$r$) that $\mu$ visited in the next two days following his/her current context time $c_{\mu}.\mathit{time}$. In this experiment, we fix $r=0.5km$. \sys contributes to HR if its output overlaps with $\zeta_{\mu}$. To simulate the cold-start, the {\em whole} set $\mu.\mathit{checkins}$ is masked off as testing set.

Each session contains $N$ consecutive iterations. Each iteration is fired by simulating the action of picking a mindset~$m$. \sys will then generate groups and their POIs using $m$ and $c_{\mu}$. The iteration ends by simulating the action of picking a group of interest $g^*$ and a POI of interest $p^*$ associated with $g^*$. The POI $p^*$ will be added to $\mathcal{P}_{\mu}$. We propose two baselines for group selection: $(i)$ pick a group at random; $(ii)$ pick a group $g^*$ where $\mathit{Cosine}(g^*.\mathit{demogs}, \mu.\mathit{demogs})$ is maximal. We call the latter the {\em optimal group} strategy. Moreover, we propose two baselines for mindset selection: $(i)$ pick one of the mindsets in Table~\ref{tbl:mindsets} at random; $(ii)$ pick the mindset~$m$ with maximal value of $m.\mathit{func}()$ for the POIs of $g^*$ in the previous iteration. We call the latter the {\em optimal mindset} strategy. For a more realistic simulation of the mindset selection process, we consider the fact that users do not always switch to a new mindset in each iteration. Hence we also consider a parameter $\theta$ which is the probability that the mindset in the next iteration will stay unchanged. For instance, in case $\theta = 0.8$, it is highly probable that the mindset does not change between consecutive iterations. By default, we employ random strategy for both group and mindset selection, and $\theta = 0.5$.

We report HR at two different levels of granularity, iteration level and session level. At the {\em iteration level}, we denote the measure as $\mathit{HR}_{I}@N$, and compute it as the average HR obtained at each iteration in a session. 

\begin{equation}
\label{eq:iterationhr}
\mathit{HR}_{I}@N = \frac{\sum_{i=1}^{S}(\frac{\sum_{j=1}^{N}\mathbbm{1}(i,j,\mu)}{N})}{S}
\end{equation}

\separateshort
In Equation~\ref{eq:iterationhr}, $N$ is the number of iterations, $S$ is the number of sessions (in our case, $S=100$), and $\mathbbm{1}(i,j,\mu)$ is an iteration-based hit indicator function which returns ``1'' if there is a hit (POI in common with $\zeta_{\mu}$) at iteration $j$ of the session $i$. At the {\em session level}, we denote the measure as $\mathit{HR}_{S}@N$, and compute it as the average HR for the whole session, as the result of all interactions in $N$ iterations.

\begin{equation}
\label{eq:sessionhr}
\mathit{HR}_{S}@N = \frac{\sum_{i=1}^{S}(\mathbbm{1}_{j=1}^{N}(i,j,\mu))}{S}
\end{equation}

In Equation~\ref{eq:sessionhr}, $\mathbbm{1}(i,j,\mu)$ is a session-based hit indicator function which returns ``1'' if there is at least one hit in the iterations $j$ of the whole session $i$. Obviously session level HR subsumes the one at the iteration level.

Figure~\ref{fig:simulation} illustrates HR values using different strategies of group selection, mindset selection, and mindset change. We measure HR by varying $N$
from $2$ to $50$. The figures on the left report iteration level HR, and the ones on the right report session level~HR. To reduce clutter, we don't show the baseline results on the figures, where \textsf{Popularity} has a constant value of $0.005$ for $\mathit{HR}_{S}@N$, and \textsf{Diversity} grows from zero to $0.001$.

\begin{figure}[t]
\includegraphics[width=\columnwidth]{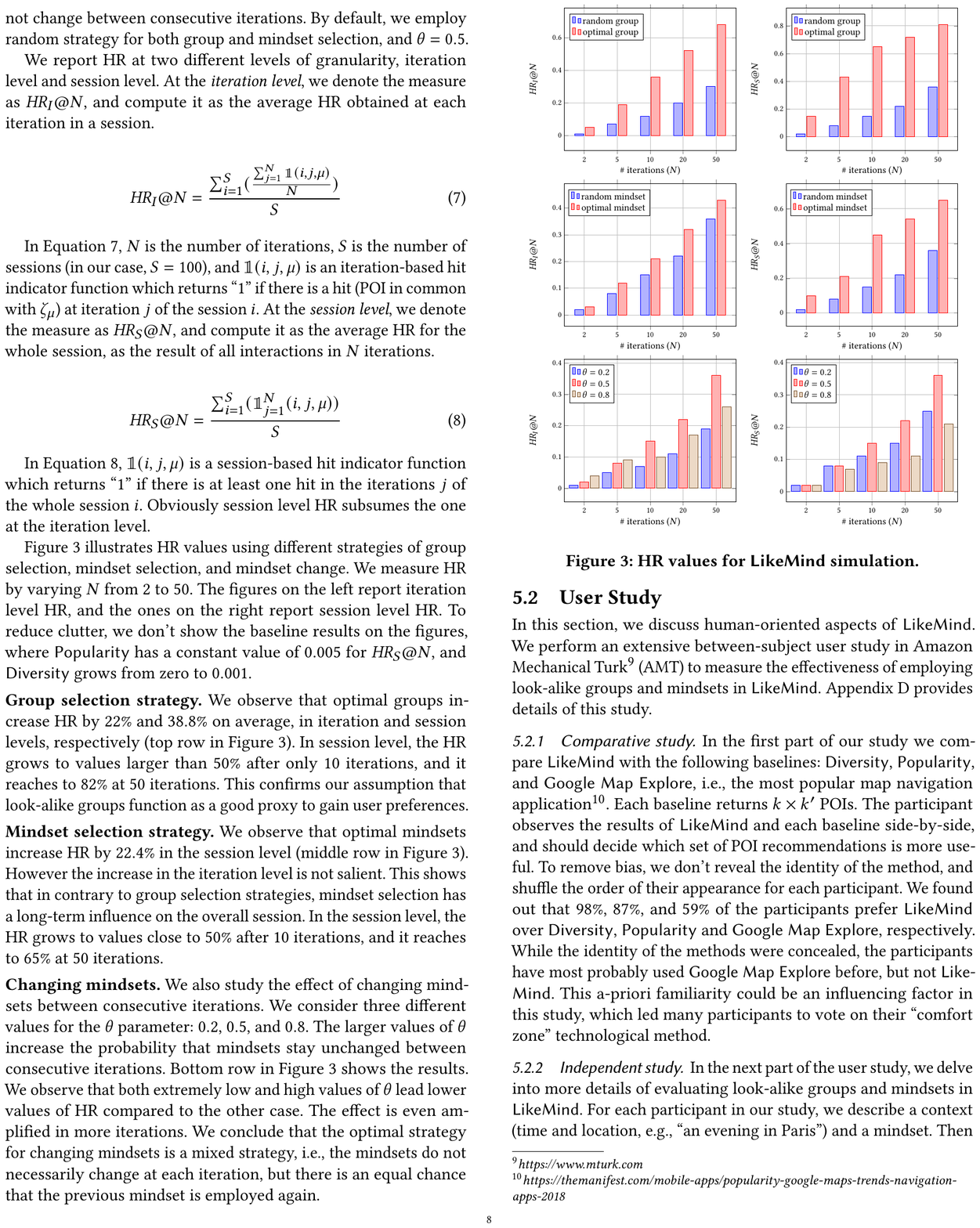}
\caption{Execution time of \sys.}
\label{fig:performance}
\end{figure}

\separateshort
\noindent \textbf{Group selection strategy.} We observe that optimal groups increase HR by $22\%$ and $38.8\%$ on average, in iteration and session levels, respectively (top row in Figure~\ref{fig:simulation}). In session level, the HR grows to values larger than $50\%$ after only $10$ iterations, and it reaches to $82\%$ at $50$ iterations. This confirms our assumption that look-alike groups function as a good proxy to gain user preferences.

\separateshort
\noindent \textbf{Mindset selection strategy.} We observe that optimal mindsets increase HR by $22.4\%$ in the session level (middle row in Figure~\ref{fig:simulation}). However the increase in the iteration level is not salient. This shows that in contrary to group selection strategies, mindset selection has a long-term influence on the overall session. In the session level, the HR grows to values close to $50\%$ after $10$ iterations, and it reaches to $65\%$ at $50$ iterations. 

\separateshort
\noindent \textbf{Changing mindsets.} We also study the effect of changing mindsets between consecutive iterations. We consider three different values for the $\theta$ parameter: $0.2$, $0.5$, and $0.8$. The larger values of~$\theta$ increase the probability that mindsets stay unchanged between consecutive iterations. Bottom row in Figure~\ref{fig:simulation} shows the results. We observe that both extremely low and high values of~$\theta$ lead lower values of HR compared to the other case. The effect is even amplified in more iterations. We conclude that the optimal strategy for changing mindsets is a mixed strategy, i.e., the mindsets do not necessarily change at each iteration, but there is an equal chance that the previous mindset is employed again.

\subsection{User Study}
\label{sec:userstudy}
In this section, we discuss human-oriented aspects of \sys. We perform an extensive between-subject user study in Amazon Mechanical Turk\footnote{\it https://www.mturk.com} (AMT) to measure the effectiveness of employing look-alike groups and mindsets in \sys. Appendix~\ref{apx:userstudy} provides details of this study. 

\subsubsection{Comparative study} 
In the first part of our study we compare \sys with the following baselines: \textsf{Diversity}, \textsf{Popularity}, and \textsf{Google Map Explore}, i.e., the most popular map navigation application\footnote{\it https://themanifest.com/mobile-apps/popularity-google-maps-trends-navigation-apps-2018}.
Each baseline returns $k \times k'$ POIs. The participant observes the results of \sys and each baseline side-by-side, and should decide which set of POI recommendations is more useful. To remove bias, we don't reveal the identity of the method, and shuffle the order of their appearance for each participant. We found out that $98\%$, $87\%$, and $59\%$ of the participants prefer \sys over \textsf{Diversity}, \textsf{Popularity} and \textsf{Google Map Explore}, respectively. While the identity of the methods were concealed, the participants have most probably used \textsf{Google Map Explore} before, but not \sys. This a-priori familiarity could be an influencing factor in this study, which led many participants to vote on their ``comfort zone'' technological method.

\begin{figure}[t]
\includegraphics[width=\columnwidth]{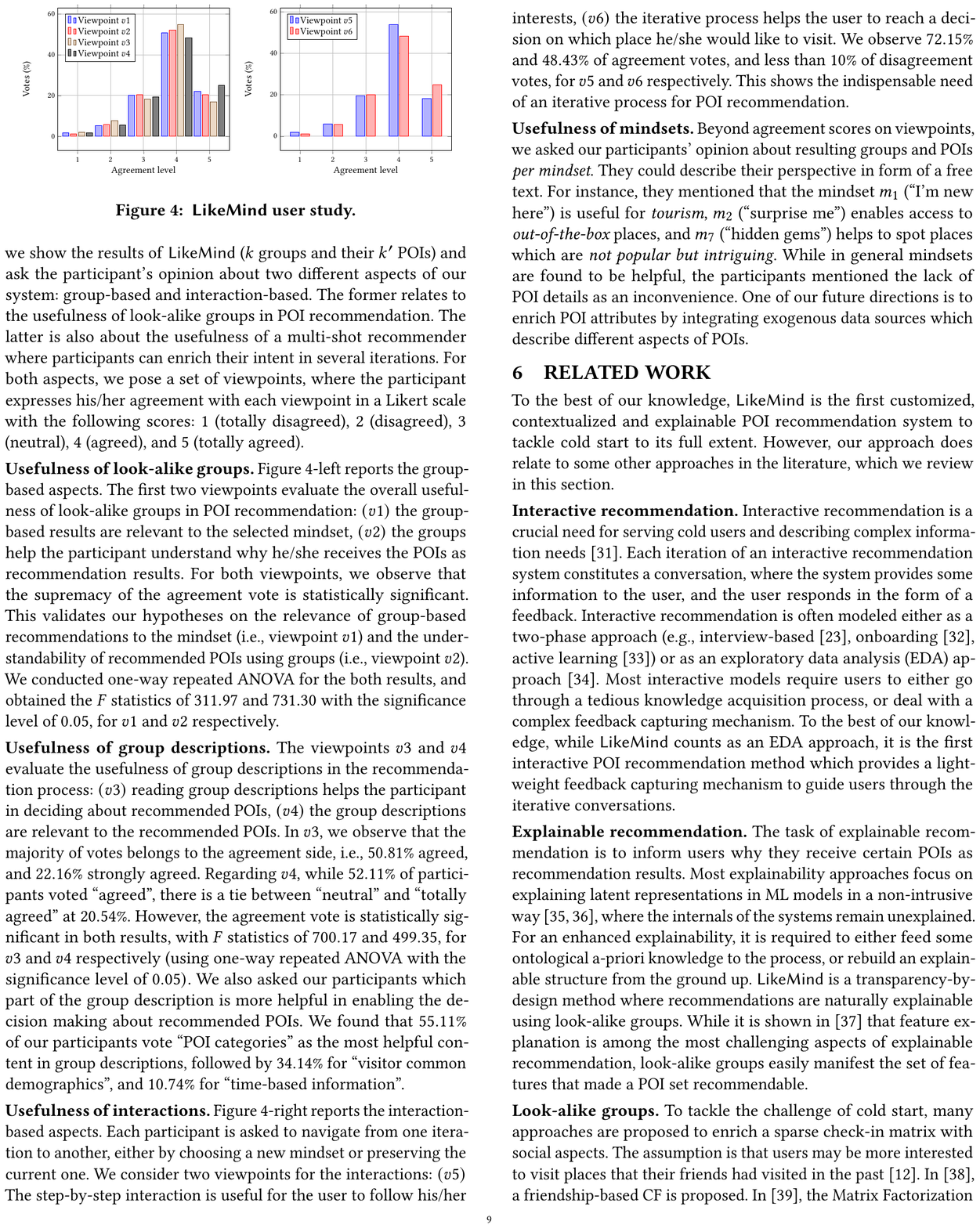}
\caption{\sys user study.}
\label{fig:userstudy} 
\vspace{-15pt}
\end{figure}

\subsubsection{Independent study} 
In the next part of the user study, we delve into more details of evaluating look-alike groups and mindsets in \sys. For each participant in our study, we describe a context (time and location, e.g., ``an evening in Paris'') and a mindset. Then we show the results of \sys ($k$ groups and their $k'$ POIs) and ask the participant's opinion about two different aspects of our system: group-based and interaction-based. The former relates to the usefulness of look-alike groups in POI recommendation. The latter is also about the usefulness of a multi-shot recommender where participants can enrich their intent in several iterations. For both aspects, we pose a set of viewpoints, where the participant expresses his/her agreement with each viewpoint in a Likert scale with the following scores: $1$ (totally disagreed), $2$ (disagreed), $3$ (neutral), $4$ (agreed), and $5$ (totally agreed).

\separateshort
\noindent \textbf{Usefulness of look-alike groups.} Figure~\ref{fig:userstudy}-left reports the group-based aspects. The first two viewpoints evaluate the overall usefulness of look-alike groups in POI recommendation: $(v1)$ the group-based results are relevant to the selected mindset, $(v2)$ the groups help the participant understand why he/she receives the POIs as recommendation results. For both viewpoints, we observe that the supremacy of the agreement vote is statistically significant. This validates our hypotheses on the relevance of group-based recommendations to the mindset (i.e., viewpoint $v1$) and the understandability of recommended POIs using groups (i.e., viewpoint $v2$). We conducted one-way repeated ANOVA for the both results, and obtained the~$F$ statistics of $311.97$ and $731.30$ with the significance level of $0.05$, for $v1$ and $v2$ respectively. 

\separateshort
\noindent \textbf{Usefulness of group descriptions.} The viewpoints $v3$ and $v4$ evaluate the usefulness of group descriptions in the recommendation process: $(v3)$ reading group descriptions helps the participant in deciding about recommended POIs, $(v4)$ the group descriptions are relevant to the recommended POIs. In $v3$, we observe that the majority of votes belongs to the agreement side, i.e., $50.81\%$ agreed, and $22.16\%$ strongly agreed. Regarding $v4$, while $52.11\%$ of participants voted ``agreed'', there is a tie between ``neutral'' and ``totally agreed'' at $20.54\%$. However, the agreement vote is statistically significant in both results, with $F$ statistics of $700.17$ and $499.35$, for $v3$ and $v4$ respectively (using one-way repeated ANOVA with the significance level of $0.05)$. We also asked our participants which part of the group description is more helpful in enabling the decision making about recommended POIs. 
We found that $55.11\%$ of our participants vote ``POI categories'' as the most helpful content in group descriptions, followed by $34.14\%$ for ``visitor common demographics'', and $10.74\%$ for ``time-based information''.

\separateshort
\noindent \textbf{Usefulness of interactions.} Figure~\ref{fig:userstudy}-right reports the interaction-based aspects. Each participant is asked to navigate from one iteration to another, either by choosing a new mindset or preserving the current one. We consider two viewpoints for the interactions: $(v5)$ The step-by-step interaction is useful for the user to follow his/her interests, $(v6)$ the iterative process helps the user to reach a decision on which place he/she would like to visit. We observe $72.15\%$ and $48.43\%$ of agreement votes, and less than $10\%$ of disagreement votes, for $v5$ and $v6$ respectively. This shows the indispensable need of an iterative process for POI recommendation.

\begin{table*}[t]
\begin{tabular}{|l|l|}
\hline
\textbf{Mindset} & \textbf{Comments} \\ \hline \hline
\textbf{$m_1$: I'm new here} & \begin{tabular}[c]{@{}l@{}} {\em ``I find this helpful, because most people have similar interests on tourism.'' ``The group}\\ {\em description allows you to follow a group that conforms to your tastes.''  ``I love to travel, so} \\ {\em these recommendations would be great, considering that I've never been to that location before.''} \\ {\em ``When visiting a new country, I like to take recommendations from others on where to go} \\ {\em and what to see.'' ``I appreciate the way the system categorized the locations because it allows} \\ {\em me to understand what type of curiosity it might fulfill.''}\end{tabular} \\ \hline
\textbf{$m_2$: surprise me} & \begin{tabular}[c]{@{}l@{}}{\em ``Since I was interested in visiting places unlike what I had seen before, I believe it makes}\\ {\em the most sense to base my selections off of the proposed groups.'' ```Surprise me' would be} \\ {\em helpful to just pick a place and go.'' ``The proposed groups seemed to offer out-of-the-box} \\ {\em places to visit, which went with my interest.''}\end{tabular} \\ \hline
\textbf{$m_3$: let's workout} & \begin{tabular}[c]{@{}l@{}} {\em ``I would want to go where other people found the most invigorating exercise that burned the} \\ {\em most calories.'' ``These are diverse adventures which fit my energetic personality.''} \\ {\em ``I appreciate that the system is showing me where I can exercise and do physical activities}\\ {\em like biking.'' ``I like physical activities but also enjoy eating a healthy meal or going to a park}\\ {\em to walk and get some exercise. So this was helpful to me that the system showed me places} \\ {\em where I can do that based on group visits/recommendations.''}\end{tabular}      \\ \hline
\textbf{$m_4$: me time}  & \begin{tabular}[c]{@{}l@{}} {\em ``The groupings would help me avoid overly busy places if I want to relax.'' ``I needed some} \\ {\em `me time' and felt that the Village Saint Paul was the best for me, with cobblestone streets,} \\ {\em a nice museum and a quaint coffee shop.'' ``A good coffeehouse or caf\'e could be helpful,} \\ {\em especially at a specific time of day, or if it is a place favored by locals.''}\end{tabular} \\ \hline
\textbf{$m_5$: I'm hungry} & \begin{tabular}[c]{@{}l@{}} {\em ``It is helpful, as you can see at a glance what restaurants and coffee shops are around you.''} \\ {\em ``I like that the places were in close proximity for meal choices.'' ``With regards to food} \\ {\em preferences, the common categories are good descriptors, because they show a little bit} \\ {\em about what kind of food experience (mainstream or local) that a person would want.''} \\ \end{tabular}  \\ \hline
\textbf{$m_6$: let's learn}  & \begin{tabular}[c]{@{}l@{}} {\em ``It gave me a chance to catch up on history.'' ``It allows you to focus on more museum-oriented} \\ {\em travelers.'' ``Since I was looking for learning opportunities, I believe it was the most useful to look} \\ {\em at the common categories of places that group members visit. ``These places are exactly what I} \\ {\em would visit Paris for.'' ``I was able to pick out places where I could learn more about the history} \\ {\em of Paris.''}\end{tabular} \\ \hline
\textbf{$m_7$: hidden gems}  & \begin{tabular}[c]{@{}l@{}} {\em ``I found this helpful because the places are not very popular, but very intriguing and interesting.''} \\ {\em ``I love the recommendations which are not common place'' ``I believe that the demographics were} \\ {\em the most useful in this scenario. To me, the individuals with many friends would be able to `get} \\ {\em the scoop' on hidden gems in the area.'' ``This helped to acknowledge smaller but good places} \\ {\em to visit that may have otherwise been missed.'' ``If I was looking for hidden gems, I would prefer to} \\ {\em go somewhere not frequented by other people at all.''}\end{tabular} \\ \hline
\end{tabular}
\caption{Participants' opinions about resulting groups and POIs for each mindset}
\label{tbl:comments}
\end{table*}

\separateshort
\noindent \textbf{Usefulness of mindsets.} Beyond agreement scores on viewpoints, we asked our participants' opinion about resulting groups and POIs {\em per mindset}. They could describe their perspective in form of a free text. An extract of these comments is listed in Table~\ref{tbl:comments}. For instance, they mentioned that the mindset $m_1$ (``I'm new here'') is useful for {\em tourism}, $m_2$ (``surprise me'') enables access to {\em out-of-the-box} places, and $m_7$ (``hidden gems'') helps to spot places which are {\em not popular but intriguing}. While in general mindsets are found to be helpful, the participants mentioned the lack of POI details as an inconvenience. One of our future directions is to enrich POI attributes by integrating exogenous data sources which describe different aspects of POIs. 

\vspace{-10pt}
\section{Related Work}
\label{sec:rel}
To the best of our knowledge, \sys is the first customized, contextualized and explainable POI recommendation system to tackle cold start to its full extent. 
However, our approach does relate to some other approaches in the literature, which we review in this section.

\separateshort
\noindent \textbf{Interactive recommendation.} Interactive recommendation is a crucial need for serving cold users and describing complex information needs~\cite{radlinski2017theoretical}. 
Each iteration of an interactive recommendation system constitutes a conversation, where the system provides some information to the user, and the user responds in the form of a feedback. Interactive recommendation is often modeled either as a two-phase approach (e.g., interview-based~\cite{DBLP:conf/iui/Amer-YahiaBEOV20}, onboarding~\cite{christakopoulou2018q}, active learning~\cite{DBLP:journals/tkde/DimitriadouPD16}) or as an exploratory data analysis (EDA) approach~\cite{omidvar2015interactive}. Most interactive models require users to either go through a tedious knowledge acquisition process, or deal with a complex feedback capturing mechanism. To the best of our knowledge, while \sys counts as an EDA approach, it is the first interactive POI recommendation method which provides a lightweight feedback capturing mechanism to guide users through the iterative conversations.

\separateshort
\noindent \textbf{Explainable recommendation.} The task of explainable recommendation is to inform users why they receive certain POIs as recommendation results.
Most explainability approaches focus on explaining latent representations in ML models in a non-intrusive way~\cite{zhang2014explicit,wang2018explainable}, where the internals of the systems remain unexplained. For an enhanced explainability, it is required to either feed some ontological a-priori knowledge to the process, or rebuild an explainable structure from the ground up. \sys is a transparency-by-design method where recommendations are naturally explainable using look-alike groups. While it is shown in~\cite{DBLP:conf/edbt/XanthopoulosTCS20} that feature explanation is among the most challenging aspects of explainable recommendation, look-alike groups easily manifest the set of features that made a POI set recommendable.

\separateshort
\noindent \textbf{Look-alike groups.} To tackle the challenge of cold start, many approaches are proposed to enrich a sparse check-in matrix with social aspects.
The assumption is that users may be more interested to visit places that their friends had visited in the past~\cite{10.1145/2525314.2525357}. In~\cite{ye2011exploiting}, a friendship-based CF is proposed. In~\cite{jamali2010matrix}, the Matrix Factorization model is extended with social regularization. Moreover, friendship links are exploited in~\cite{lalwani2015community} to build friend groups using Community Detection techniques. In \sys, we extend the domain of cold start to social aspects, i.e., no friendship link is available for users. \sys builds explainable look-alike groups without relying on any social aspects or check-ins of the user. 

\separateshort
\noindent \textbf{Mindsets.} Mindsets capture user needs. Category-based search interfaces~\cite{hearst2009search,10.1145/2666310.2666479} 
capture explicit needs of users in the form of categories (e.g., selecting POIs of the category ``historical landmarks''). Categories enable more personalization for users, resulting in less anxiety and more trust over the system. However, realistic scenarios often contain ambiguous needs and intents, where users seek to disambiguate in an iterative process. The only possible iteration in traditional search paradigms is to restart a search with another category. Exploratory travel interfaces~\cite{Viswanathan:2019:ACN:3301019.3323885}
have been found to enhance user experience in POI exploration with serendipity measures.
In \sys, we employ ``mindsets'', which is an intuitive way of capturing user's implicit intents. Mindsets reflect interests of users, and align to users' preferences during iterations.

\vspace{-5pt}
\section{Conclusion}
\label{sec:conc}
We present \sys, an interactive and explainable POI recommendation system based on look-alike groups, which tackles the common challenges of cold start, customizability, contextuality, and explainability. We introduce the notion of ``mindsets'' which extends user context, and captures actual situation and intents of the user. In an extensive set of experiments, we show that \sys achieves a Hit Ratio higher than $50\%$ only after $10$ iterations. 
In an extensive user study, we showed the effectiveness of look-alike groups and mindsets for POI recommendation.

We have several directions to pursue as future work. We plan to integrate planning and mobility aspects in \sys. While our focus in the current paper was on the actual context of the user, the functionality of the system can be enriched with some user-defined constraints on POI preferences and mobility patterns. 

\section*{Acknowledgement}
We thank Antonietta Grasso, Adrien Bruyat, C\'ecile Boulard, and Denys Proux, for their valuable comments and feedback.

\bibliography{references}

\section*{Appendix}
\appendix

\section{NP-Completeness Proof}
\label{apx:np}

\begin{theorem}
The POI recommendation problem defined in Section~\ref{sec:problemdef} is NP-Complete.
\end{theorem}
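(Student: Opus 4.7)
The plan is to establish NP-Completeness in two standard steps: first exhibit membership in NP via a polynomial-time verifier, then prove NP-hardness by a polynomial-time reduction from a known NP-hard problem. I would cast the problem in decision form, asking whether there exist $k$ groups $G$ (each with $k'$ POIs) satisfying conditions $(i)$--$(iii)$ of Section~\ref{sec:problemdef} and achieving $\sum_{g \in G} m.\mathit{func}(g.\mathit{POIs}, \mu) \geq T$ for a given threshold $T$.

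NP-membership is the easy direction. Given a candidate answer, I verify the relevance of each group through Equation~\ref{eq:rel}, check geographic distance to $c_\mu.\mathit{loc}$, compare time categories against $c_\mu.\mathit{time}$, and evaluate the mindset function by summing the underlying POI utilities from Table~\ref{tbl:utility}. Each utility (Jaccard distances, averages over check-ins, convex-hull area, etc.) is computable in time polynomial in $|g.\mathit{POIs}| + |\mathcal{P}_\mu|$, so the total verification runs in polynomial time.

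For NP-hardness, my plan is to reduce from \textsc{Maximum Coverage}: given a universe $U$, sets $S_1, \ldots, S_n \subseteq U$, and an integer $\kappa$, decide whether some $\kappa$ sets cover at least $T$ elements. I would encode each element $e \in U$ as a POI located inside radius $r$ at time $c_\mu.\mathit{time}$, equipped with its own category; each set $S_i$ as a visitor $u_i$ whose check-ins are exactly the POIs for elements of $S_i$; and choose demographic attributes so that the valid groups (in the sense of Section~\ref{sec:lookalike}) are the singletons $g_i = \langle \{u_i\}, \emptyset, S_i \rangle$. Taking $\sigma = 0$ trivially satisfies relevance (cold-start branch of Equation~\ref{eq:rel}), setting $k = \kappa$ aligns the cardinality, and the mindset $m$ is constructed by zeroing all priors except one built on a category-sensitive utility, so that $m.\mathit{func}(S_i, \mu)$ measures what $S_i$ contributes.

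The main obstacle is reconciling the additive $\sum_{g \in G}$ structure of condition $(iv)$ with the union-style objective of \textsc{Maximum Coverage}: a naive reduction double-counts elements appearing in multiple chosen sets. I see two plausible escapes. The first is to reduce instead from an already additively structured NP-hard problem, such as \textsc{Densest $k$-Subgraph} or \textsc{Max-Sum $k$-Dispersion}, where a sum over selected items is the natural aggregator. The second is to exploit the non-linearity of Jaccard-based utilities (\emph{diversity}, \emph{surprisingness}) on category sets: by crafting POI categories so that the utility of a group depends on which \emph{combination} of elements it contains, one can make the additive objective effectively penalise overlap across groups. A secondary subtlety is ensuring that the reduction admits \emph{no} spurious groups beyond the intended ones, which I would enforce by giving each visitor pairwise-distinct demographic attributes so that any valid group must have a single member and its POI set must lie within that visitor's check-ins.
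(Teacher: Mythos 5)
Your NP-membership argument is fine (and more careful than the paper, which omits that half entirely), but the hardness half of your proposal has a genuine gap: you never actually complete a reduction. Your primary plan, reducing from \textsc{Maximum Coverage}, fails for exactly the reason you yourself identify --- condition $(iv)$ is a sum of per-group utilities, each depending only on that group's own POI set, so it cannot express the union semantics of coverage --- and the two ``escapes'' you offer are left unexecuted. Moreover, the obstacle you spotted is more serious than you treat it: because $m.\mathit{func}(g.\mathit{POIs},\mu)$ is evaluated independently on each selected group, the objective in $(iv)$ is completely separable, and the constraints $(i)$--$(iii)$ are all per-group filters; if the candidate groups were listed explicitly, the optimum would be obtained by sorting groups by individual utility and taking the top $k$. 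Any valid hardness argument must therefore locate the difficulty elsewhere --- in the implicitly defined, potentially exponential space of valid groups (all itemsets with support above threshold), or in an objective that genuinely couples the chosen groups. Your \textsc{Densest $k$-Subgraph} fallback does not obviously escape this either, since counting induced edges is a sum over \emph{pairs} of selected vertices, again not expressible as a sum of independent per-group terms without further gadgetry.

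For comparison, the paper's proof is a two-sentence sketch: it cites a prior result that selecting $k$ groups to maximize an optimization objective is NP-complete by reduction from \textsc{Maximum Edge Subgraph} (which is the same problem as \textsc{Densest $k$-Subgraph}, so your first fallback converges on the same source), and then argues that the present problem generalizes it and is ``hence obviously harder.'' That argument is itself loose --- it skips NP membership, and a generalization is NP-hard only if the special case is realizable by polynomial-size instances with the extra constraints neutralized --- but it at least leans on a reduction that is carried out in the cited work. Your proposal stops where the real work begins: to close the gap you would need either to execute one of your fallback reductions in full (showing concretely how overlap between groups is penalized through the Jaccard-based utilities and that no spurious groups arise), or to follow the paper and reduce from the established group-selection hardness result while verifying that its instances survive the added relevance, distance, and time constraints.
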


\begin{proof}
(sketch) It is shown in~\cite{DBLP:journals/vldb/Omidvar-Tehrani19} that the problem of picking $k$ groups by maximizing an optimization objective is NP-Complete by a reduction from Maximum Edge Subgraph (MES) problem. In the problem of POI recommendation using look-alike groups, we have two additional elements: $(i)$ relevance and distance constraints, and $(ii)$ maximization of more than one objective (as mindset functions combine several utility functions as objectives). This means that the problem in~\cite{DBLP:journals/vldb/Omidvar-Tehrani19} is a special case of ours. Hence our problem is obviously harder.
\end{proof}

\section{Transaction embeddings}
\label{apx:embeddings}

Transaction embeddings are encoded as integers. Hence all POIs and demographic values should be encoded to single integers. While POIs are often already associated to a unique identifier which can be directly used in the embeddings, demographic attributes may have different values in discrete or continuous domains. For instance, the demographic attribute ``number of check-ins'' may contain different values in a wide range. We employ an equal-frequency discretization approach to obtain $4$ categories for each demographic attribute.
Table~\ref{tbl:discretization} illustrates discretization values for demographic attributes in \textsc{Gowalla}. The discretized attributes result in $28$ items ($7$ attributes and $4$ categories for each) to be concatenated to the embeddings. For instance, a visitor embedding may contain items ``many places'' and ``few photos'', which means that the visitor went to many places, but did not take many pictures.

\begin{table}[h]
\begin{tabular}{|l|l|l|l|l|}
\hline
         & \textbf{very few}          & \textbf{few}       & \textbf{some}       & \textbf{many}       \\ \hline \hline
\textbf{items}    & $\leq 2$    & $(2,3]$   & $(3,5]$    & $\geq 5$ \\ \hline
\textbf{photos}   & $\leq 1$     & $(1,2]$   & $(2,5]$    & $\geq 5$ \\ \hline
\textbf{friends}  & $\leq 1$     & $(1,3]$   & $(3,5]$    & $\geq 5$ \\ \hline
\textbf{check-ins} & $\leq 3$ & $(3, 12]$ & $(12, 34]$ & $\geq 34$ \\ \hline
\textbf{places}   & $\leq 3$ & $(3,9]$   & $(9,23]$   & $\geq 23$ \\ \hline
\end{tabular}
\caption{Equal-frequency discretization of demographic attributes in \textsc{Gowalla} dataset.}
\label{tbl:discretization}
\vspace{-20pt}
\end{table}

Beyond demographics and visited POIs, we enrich the embeddings by POI categories and check-in time. 
For a POI $p$ in an embedding of a visitor $u$, we concatenate $p.\mathit{att}$ to $\mathit{trans}(u)$. For instance, if \textit{Louvre} $\in \mathit{trans}(u)$, we also consider $\langle \mathit{cat}, \mathit{museum}\rangle$ as an additional item in $\mathit{trans}(u)$. This enables the system to generate a group of visitors who check in museums in general, but not necessarily Louvre. For instance in Figure~\ref{fig:example}, the visitors of the green group have all checked in a ``historical landmark'', but it could be different landmarks that they had checked in. Also for a pair of POI and visit time $\langle p,t \rangle \in u.\mathit{checkins}$, we discretize $t$ to {\em hourly} 
categories (``morning'', ``afternoon'', ``evening'', and ``night'') 
and {\em weekly} categories 
(``weekday'' and ``weekend'') 
and concatenate $\mathit{trans}(u)$ with those hourly and weekly categories alongside $p$'s category. For instance in Figure~\ref{fig:example2}, members of the green group have all checked in a restaurant (any POI with the category ``restaurant'') in the evening.

\section{Performance Study}
\label{apx:perf}
In addition to the simulation study (Section~\ref{sec:simstudy}) and the user study (Section~\ref{sec:userstudy}), we evaluate the efficiency of \sys by measuring the average execution time at each iteration. We conjecture that the radius $r$ and the number of groups $k$ are the two most influencing input parameters on the performance of \sys. Hence we report the execution time by varying $r$ between $50m$ and $1km$, and $k$ between $5$ and $70$. Note that we don't analyze the $k'$ parameter, as it is a dependent variable to $k$. All the performance experiments are conducted on an 2.2 GHz Intel Core i7 with 32GB of DDR4 memory on OS X 10.14.6 operating system.

\begin{figure}[h]
    \centering
    \includegraphics[width=\columnwidth]{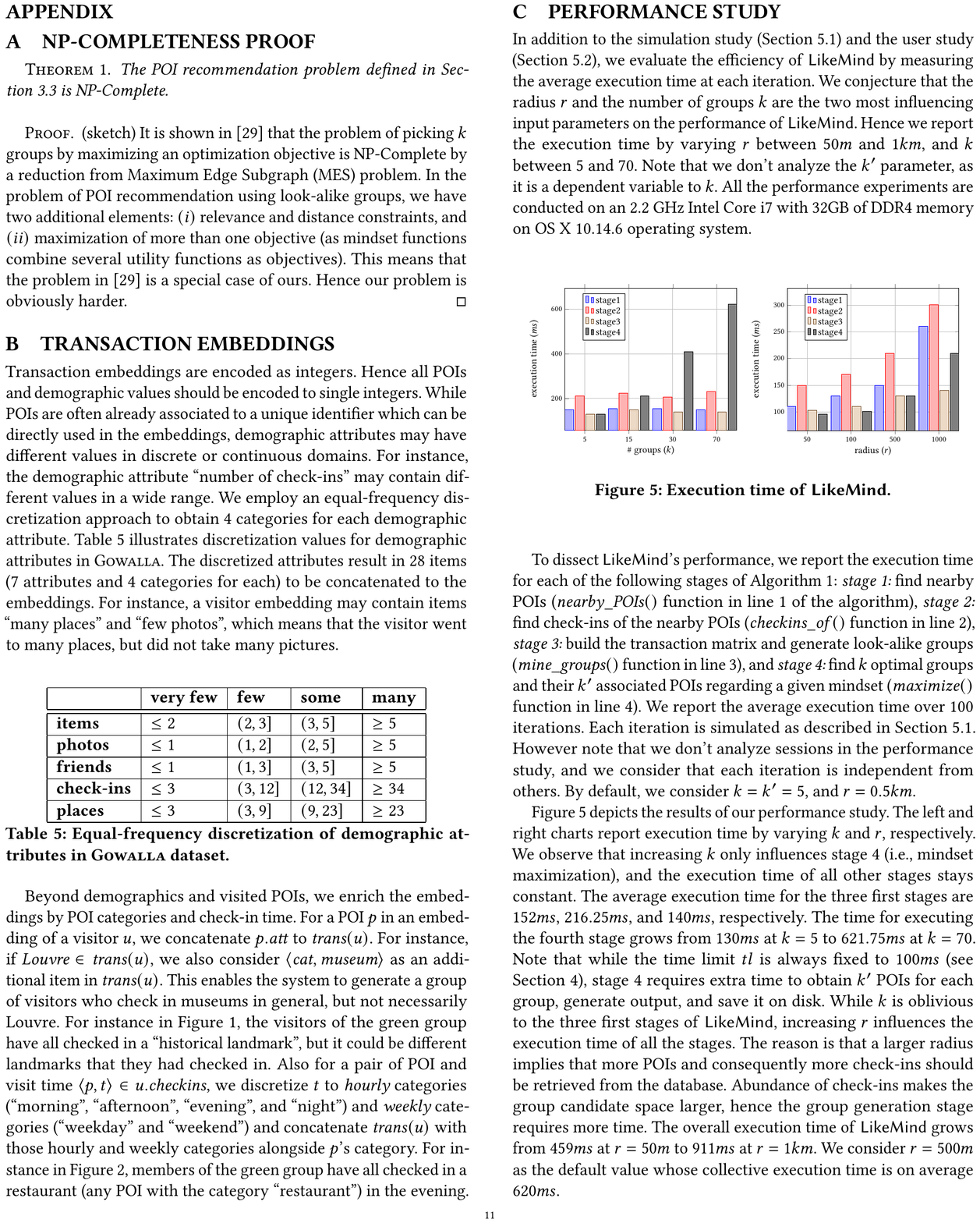}
    \caption{Execution time of \sys.}
    \label{fig:performance}
\end{figure}

\begin{figure*}[t]
    \centering
    \includegraphics[width=\textwidth]{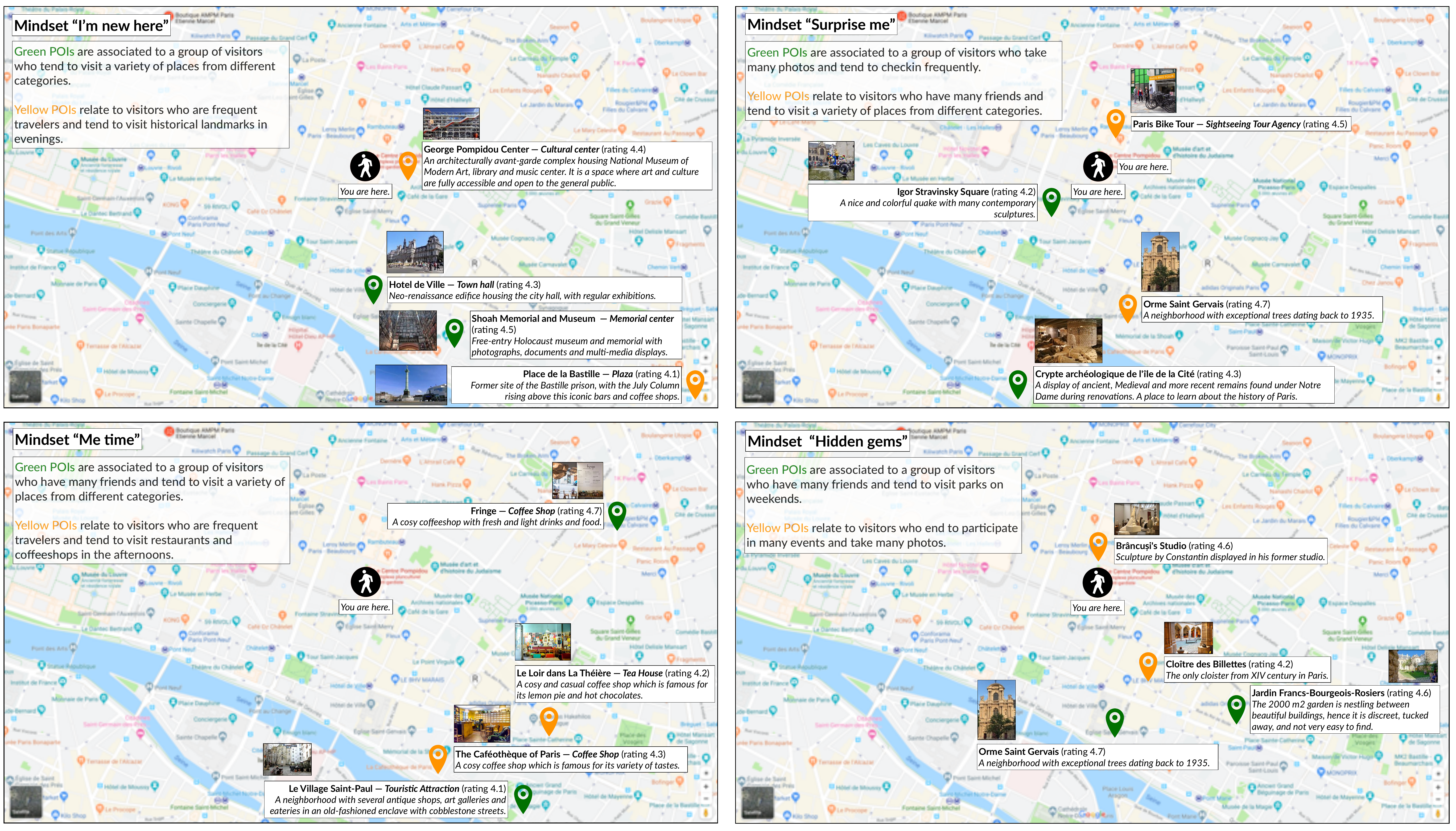}
    \caption{An example of POIs provided in the user study.}
    \label{fig:amtexample}
\end{figure*}

To dissect \sys's performance, we report the execution time for each of the following stages of Algorithm~~\ref{algo:iprelo}: {\em stage~1:} find nearby POIs ($\mathit{nearby\_POIs}()$ function in line~\ref{ln:getpoi} of the algorithm), {\em stage 2:} find check-ins of the nearby POIs ($\mathit{checkins\_of}()$ function in line~\ref{ln:getcheckins}), {\em stage 3:} build the transaction matrix and generate look-alike groups ($\mathit{mine\_groups}()$ function in line~\ref{ln:minegroups}), and {\em stage 4:} find $k$ optimal groups and their $k'$ associated POIs regarding a given mindset ($\mathit{maximize}()$ function in line~\ref{ln:max}). We report the average execution time over $100$ iterations. Each iteration is simulated as described in Section~\ref{sec:simstudy}. However note that we don't analyze sessions in the performance study, and we consider that each iteration is independent from others. By default, we consider $k=k'=5$, and $r=0.5km$.

Figure~\ref{fig:performance} depicts the results of our performance study. The left and right charts report execution time by varying $k$ and~$r$, respectively. We observe that increasing $k$ only influences stage 4 (i.e., mindset maximization), and the execution time of all other stages stays constant. The average execution time for the three first stages are $152ms$, $216.25ms$, and $140ms$, respectively. The time for executing the fourth stage grows from $130ms$ at $k=5$ to $621.75ms$ at $k=70$. Note that while the time limit $tl$ is always fixed to $100ms$ (see Section~\ref{sec:algo}), stage 4 requires extra time to obtain $k'$ POIs for each group, generate output, and save it on disk. While~$k$ is oblivious to the three first stages of \sys, increasing $r$ influences the execution time of all the stages. The reason is that a larger radius implies that more POIs and consequently more check-ins should be retrieved from the database. Abundance of check-ins makes the group candidate space larger, hence the group generation stage requires more time. The overall execution time of \sys grows from $459ms$ at $r=50m$ to $911ms$ at $r=1km$. We consider $r=500m$ as the default value whose collective execution time is on average $620ms$.


\section{User study}
\label{apx:userstudy}
We recruited $753$ participants in AMT 
and forwarded them to a Survey Monkey questionnaire\footnote{\it https://www.surveymonkey.com} to answer different questions about the functionality of our proposed system. $51\%$ of the participants were female. Also the majority of them was in the age range of 25-34 ($44.09\%$) followed by the age ranges 34-44 and 18-24 by $20.22\%$ and $18.04\%$, respectively. Moreover, $54\%$ of them were from the US, $22\%$ from India, and $24\%$ from the rest of the world. Each participant received $\$0.07$ as the incentive to complete the study. To make the results of our study easily interpretable and comparable, we consider a controlled environment where the participants always select one of the mindsets in Table~\ref{tbl:mindsets} and won't propose a new mindset (mindset creation is discussed in Section~\ref{sec:beyond}). The user study for mindset creation is a part of our future work.

\separateshort
\noindent \textbf{Before the user study.} Before we show \sys results to the participants, we perform a pre-test to identify motivations behind using a POI recommender. The participants should select at least one among $5$ pre-defined options depicted in Table~\ref{tbl:motivation}. We observe that the dominant motivations are ``knowing more about places'' and ``spending time with others''. The exploratory nature of \sys helps users to know better their POI options in several iterations. Also look-alike groups help them to receive an explanation for the recommendations, and enable them to decide better about where they can go with their company. The third dominant option is ``well-being'', which stresses on the specificity of users' intents. Mindsets in \sys are designed to capture such intents. For instance, by selecting the mindsets $m_3$ (let's workout) and $m_4$ (me-time), the user will receive recommendations which optimize his/her physical and mental well-being, respectively.

\begin{table}[h]
\begin{tabular}{|l|l|}
\hline
\textbf{Option}                      & \textbf{Vote (\%)} \\ \hline \hline
know more about places               & 58.38             \\ \hline
spend time with friends and family   & 56.48             \\ \hline
well-being (mental and physical)      & 37.92             \\ \hline
post stories and photos on Instagram & 20.71             \\ \hline
boredom                              & 16.96             \\ \hline
\end{tabular}
\caption{Motivations for using a POI recommender}
\label{tbl:motivation}
\vspace{-20pt}
\end{table}

\separateshort
\noindent \textbf{During the user study.} The pre-test helped us design AMT tasks for our user study based on users' needs. 
Given a mindset $m$, the participants receive $k$ groups and $k'$ POIs optimized for $m$. In comparative studies, $k \times k'$ POIs of a baseline will be also illustrated side-by-side. Figure~\ref{fig:amtexample} shows examples of POIs that the participants received during our user study, for the mindsets ``I'm new here'', ``surprise me'', ``me time'', and ``hidden gems'' ($k=k'=2$). All POIs are at most one kilometer far (i.e., $r=1km$) from George Pompidou Center in Paris (i.e., the user's location). As AMT participants may not know Paris, each POI is annotated with a photo, rating, and a concise description, using Google Places API\footnote{\it cloud.google.com/maps-platform/places}.

\end{document}